\documentclass[
  aps,
  pra,
  reprint,
  letterpaper,
  amsmath,
  amssymb,
  superscriptaddress,
  longbibliography
]{revtex4-2}

\usepackage{mathtools}
\usepackage{amsthm}
\usepackage{microtype}
\usepackage{graphicx}
\usepackage{mathrsfs}
\usepackage[colorlinks=true,allcolors=blue]{hyperref}

\allowdisplaybreaks

\newtheorem{theorem}{Theorem}
\newtheorem{proposition}[theorem]{Proposition}

\newtheorem{corollary}[theorem]{Corollary}
\theoremstyle{remark}

\newcommand{\Q}{\mathcal{Q}}
\newcommand{\Qp}{\mathcal{Q}_{\mathrm{PVM}}}
\newcommand{\Tr}{\operatorname{Tr}}
\newcommand{\id}{\mathbb{I}}
\newcommand{\ket}[1]{\lvert #1\rangle}
\newcommand{\bra}[1]{\langle #1\rvert}
\newcommand{\abs}[1]{\lvert #1\rvert}
\newcommand{\norm}[1]{\lVert #1\rVert}
\newcommand{\Bellop}{\widehat{\mathcal{B}}}
\newcommand{\Gram}{\boldsymbol{\Gamma}}
\newcommand{\Ksqrt}{\mathbb{Q}(\sqrt{2})}

\makeatletter
\newcommand{\equalcontrib}{%
  \frontmatter@footnote{L.Z. and R.C. contributed equally to this work.}%
}
\makeatother

\begin{document}

\title{Analytic Qubit Separation between POVMs and Projective Measurements}

\author{Lin Zhu\equalcontrib}
\affiliation{Thrust of Artificial Intelligence, Information Hub,
The Hong Kong University of Science and Technology (Guangzhou),
Guangdong 511453, China}
\affiliation{Quantum Science Center of Guangdong--Hong Kong--Macao
Greater Bay Area, Shenzhen 518045, China}

\author{Ranyiliu Chen\equalcontrib}
\email{chenranyiliu@quantumsc.cn}
\affiliation{Quantum Science Center of Guangdong--Hong Kong--Macao
Greater Bay Area, Shenzhen 518045, China}

\author{Xin Wang}
\affiliation{Thrust of Artificial Intelligence, Information Hub,
The Hong Kong University of Science and Technology (Guangzhou),
Guangdong 511453, China}

\author{Shenggen Zheng}
\email{zhengshenggen@quantumsc.cn}
\affiliation{Quantum Science Center of Guangdong--Hong Kong--Macao
Greater Bay Area, Shenzhen 518045, China}

\begin{abstract}
Generalized measurements can be implemented projectively after enlarging the Hilbert space, but this dilation changes the available local dimension. We construct a Bell functional with rational coefficients that separates the two measurement models at local dimension two. An explicit three-outcome qubit positive-operator-valued measure with rational matrix entries attains \(2\sqrt2+1/100\). On the other hand, all qubit-projective strategies are bounded by \(2\sqrt2+\sqrt5/250+\sqrt2/32400\), giving a fully analytic certified gap greater than \(1/1000\). To our knowledge, this is the first fully analytic Bell-functional separation between qubit POVMs and qubit projective measurements over arbitrary shared two-qubit states. Lean certificate for the separation theorem is provided for completeness. Separately, an exact level-3 noncommutative sum-of-squares certificate proves that the explicit qubit strategy attains the unrestricted finite-dimensional tensor-product quantum optimum. 
\end{abstract}

\maketitle

\section{Introduction}
\label{sec:introduction}

Bell inequalities distinguish correlations compatible with local hidden-variable models from those allowed by quantum theory~\cite{Bell1964,CHSH1969}. General quantum measurements are described by positive-operator-valued measures (POVMs), whereas projection-valued measures (PVMs) have mutually orthogonal projection effects. Naimark's theorem realizes every POVM as a PVM on an enlarged Hilbert space~\cite{Beneduci2020}. The distinction therefore becomes operational when the local dimension, or the auxiliary quantum resources available for measurement, is constrained. This raises the question of whether generalized measurements can generate Bell correlations that projective measurements cannot reproduce within the same fixed dimension.

\begin{figure}[t]
    \centering
    \includegraphics[width=\linewidth]{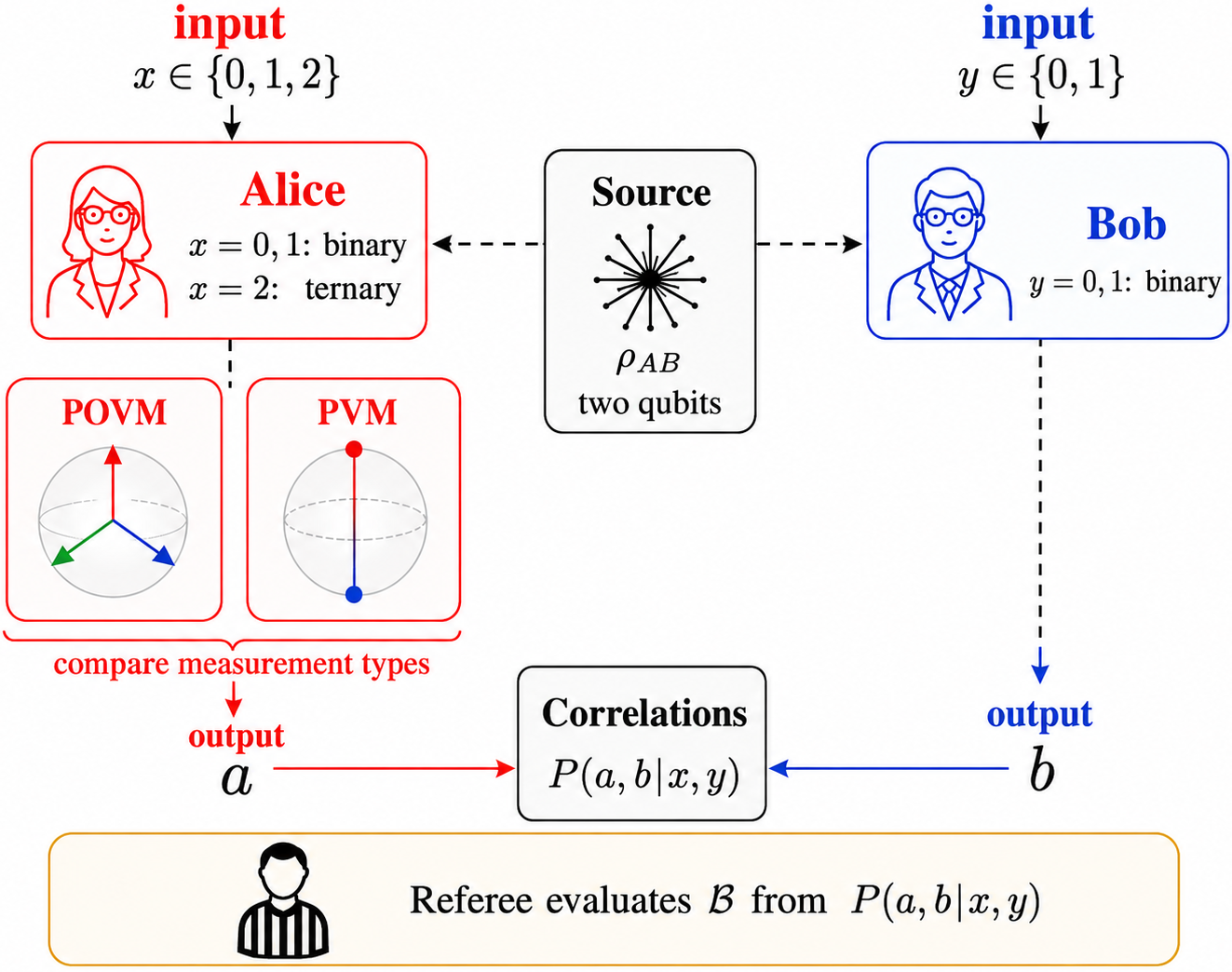}
    \caption{Bell scenario for comparing qubit POVMs with qubit projective measurements. A source distributes a two-qubit state \(\rho_{AB}\) to Alice and Bob. Alice receives \(x\in\{0,1,2\}\), where \(x=0,1\) label binary measurements and \(x=2\) labels a ternary measurement. Bob receives \(y\in\{0,1\}\) and performs a binary measurement. Their outputs determine conditional probabilities \(P(a,b\mid x,y)\), which are evaluated by the Bell functional \(\mathcal B\). In the separating strategy, Alice's ternary measurement is a genuine qubit POVM, whereas every measurement in the fixed-qubit comparison class is projective.}
    \label{fig:bell-game-setup}
\end{figure}

Gisin highlighted the absence, at the time, of a Bell inequality whose optimal quantum violation required a POVM~\cite{Gisin2009}. P\'al and V\'ertesi later showed that finite-dimensional quantum correlation sets can be nonconvex and exhibited correlations attainable with nonprojective qubit effects but not with a single qubit-projective realization~\cite{PalVertesi2009}. This established a set-theoretic distinction at fixed dimension, but not a linear separation from the convex hull of projective strategies, which is needed to exclude simulations using shared classical randomness.

V\'ertesi and Bene introduced an asymmetric Bell scenario in which Alice has two binary settings and one ternary setting, while Bob has two binary settings~\cite{VertesiBene2010}. Their Bell functional combines a dominant binary term with a weaker ternary contribution favoring a genuine three-outcome qubit POVM. They proved analytically a POVM advantage for a fixed maximally entangled two-qubit state and supported the optimization over arbitrary two-qubit states numerically. Barra \emph{et al.} later studied a broader parametrization of the ternary measurement and its detection efficiency~\cite{BarraEtAl2012}. Complementary work formalized projective simulability using PVMs, classical randomization, and classical postprocessing~\cite{OszmaniecEtAl2017}, while more recent studies considered projectivity under full-Schmidt-rank assumptions and perturbative fixed-dimensional methods~\cite{BaptistaEtAl2025,CerfOllivier2026}.

These developments leave a precise analytic gap. A POVM advantage proved for a fixed maximally entangled state does not exclude a projective strategy that changes the shared state and sacrifices part of the dominant Bell score to improve the ternary contribution. Although this global two-qubit optimization was investigated numerically in Ref.~\cite{VertesiBene2010}, no closed analytic bound simultaneously covered arbitrary shared two-qubit states, degenerate binary measurements, and all rank patterns of a ternary qubit PVM. A Bell-functional separation from the convex hull of all qubit-projective behaviors over arbitrary two-qubit states, was therefore still missing. A separate question is whether the resulting qubit POVM is optimal only relative to qubit PVMs or among all finite-dimensional quantum strategies.

We address both questions in the scenario of Fig.~\ref{fig:bell-game-setup}. Retaining the setting--outcome architecture and anchor--probe structure of Ref.~\cite{VertesiBene2010}, we introduce a rational deformation of a three-ray qubit measurement and derive a global analytic stability bound. The dominant CHSH term anchors the shared state and Bob's observables. Its defect from the quantum maximum controls both the bias of Bob's reduced state and the nonorthogonality of his measurement directions. The CHSH score decreases linearly in this defect, whereas the projective ternary contribution can increase only as its square root. For a weak probe, the POVM advantage is therefore first order in the probe strength, while the optimal projective compensation is second order. This converts the geometric advantage on the exact CHSH-maximizing slice into a global arbitrary-state separation.

Our results have two components. First, we construct an explicit three-outcome qubit POVM and prove analytically that its Bell value exceeds that of every qubit-projective strategy with an arbitrary shared two-qubit state. The proof covers degenerate binary measurements, both ternary qubit-PVM rank patterns, shared classical randomness, and classical output processing. It therefore yields a strict separation from the convex hull of qubit-projective behaviors without a numerical semidefinite-programming hierarchy. A Lean~4 formalization of the separation theorem is provided for completeness. Second, an exact level-3 noncommutative sum-of-squares certificate proves that the explicit qubit strategy attains the unrestricted finite-dimensional tensor-product quantum optimum. Thus the same two-qubit realization is not only superior to every qubit PVM but also globally optimal when arbitrary finite local dimensions, states, and POVMs are allowed.


Since Naimark dilation removes the distinction between POVMs and PVMs only when enlarging the local Hilbert space is free, nonorthogonal effects can constitute a genuine resource under fixed-dimensional constraints. Although earlier Bell witnesses already demonstrated the relevance of nonprojective measurements, they did not provide a closed analytic separation over arbitrary two-qubit states and convex mixtures of projective strategies. Our contribution is an exact separation robust to state reoptimization, degenerate measurements, shared randomness, and classical postprocessing. Together with the Lean formalization and exact unrestricted-dimensional SOS certificate, this identifies the missing Naimark ancilla as an operational measurement resource.

\section{Scenario and main results}
\label{sec:scenario}

Consider a finite bipartite Bell scenario. For a fixed \(D\in\mathbb N\), a behavior is a family of conditional probabilities
\begin{equation}
 P(a,b\mid x,y)
 =\Tr\!\left[\rho\left(M_{a\mid x}\otimes N_{b\mid y}\right)\right],
\label{eq:behavior}
\end{equation}
where \(\rho\) acts on \(\mathbb C^{D_A}\otimes\mathbb C^{D_B}\), with \(D_A,D_B\leq D\), and the local effects obey
\begin{equation}
 M_{a\mid x}\succeq0,\quad \sum_aM_{a\mid x}=\id,
 \qquad
 N_{b\mid y}\succeq0,\quad \sum_bN_{b\mid y}=\id.
\end{equation}

Here, \(\id\) denotes the identity operator on the corresponding local
Hilbert space. Let \(\Q(D)\) be the set of all such behaviors, and let \(\Qp(D)\) be the subset obtained by requiring the effects of every measurement to be mutually orthogonal projections. Zero projections are allowed. The shared state is arbitrary in both classes. We write
\begin{equation}
 \Q=\bigcup_{D\geq1}\Q(D)
\label{eq:unrestrictedQ}
\end{equation}
for the finite-dimensional tensor-product quantum set. Since the functional below is linear and its optimum is attained, no closure issue affects the stated maximum.

Alice has settings \(x\in\{0,1,2\}\). The settings \(x=0,1\) have outcomes \(\alpha\in\{+1,-1\}\), while \(x=2\) has outcomes \(a\in\{0,1,2\}\). Bob has settings \(y\in\{0,1\}\), both with outcomes \(b\in\{+1,-1\}\). Define
\begin{align}
 E_{xy}
 &=\sum_{\alpha,b=\pm1}\alpha b\,p(\alpha,b\mid x,y),
 &&x=0,1, \label{eq:Edef}\\
 C_{ay}
 &=\sum_{b=\pm1}b\,p(a,b\mid2,y),
 &&a=0,1,2. \label{eq:Cdef}
\end{align}
The Bell functional is
\begin{align}
 S&=E_{00}+E_{01}+E_{10}-E_{11}, \label{eq:Sdef}\\
 F&=C_{00}-\frac35C_{10}+\frac45C_{11}
       -\frac35C_{20}-\frac45C_{21}, \label{eq:Fdef}\\
 \mathcal B&=S+\frac1{100}F. \label{eq:Bdef}
\end{align}
It is useful to collect the probe coefficients as
\begin{equation}
 \boldsymbol\nu_0=(1,0),\quad
 \boldsymbol\nu_1=\left(-\frac35,\frac45\right),\quad
 \boldsymbol\nu_2=\left(-\frac35,-\frac45\right).
\label{eq:nu}
\end{equation}
We denote the \(y\)th component of \(\boldsymbol\nu_a\) by \(\nu_{ay}\). Thus \(F=\sum_{a,y}\nu_{ay}C_{ay}\). All probability coefficients of \(\mathcal B\) are rational.

For a deterministic local strategy, the binary outputs can attain at most the local CHSH value \(2\). Independently, Alice's ternary output can be chosen to maximize \(\boldsymbol\nu_a\cdot(b_0,b_1)\), whose maximum over deterministic Bob outputs is \(7/5\). Hence the local bound is
\begin{equation}
 \beta_{\mathrm L}=2+\frac7{500}=\frac{1007}{500}=2.014.
\label{eq:local-bound}
\end{equation}
For a set of behaviors \(\mathcal S\), let \(\operatorname{conv}\mathcal S\) denote its convex hull.
\begin{theorem}[Analytic fixed-qubit POVM--PVM separation]
\label{thm:fixed-separation}
There exists an explicit behavior \(p_\star\in\Q(2)\), generated by a two-qubit state and a genuine three-outcome qubit POVM, such that
\begin{equation}
 \mathcal B(p_\star)=2\sqrt2+\frac1{100}.
\label{eq:povmvalue}
\end{equation}
For every shared state with local dimensions at most two and every family of local PVMs, including degenerate binary measurements and zero ternary projections,
\begin{equation}
 \mathcal B
 \leq2\sqrt2+\frac{\sqrt5}{250}
               +\frac{\sqrt2}{32400}.
\label{eq:pvmupper}
\end{equation}
Consequently,
\begin{equation}
 p_\star\notin\operatorname{conv}\Qp(2),
 \qquad
 \Qp(2)\subsetneq\Q(2),
\end{equation}
and the certified Bell-value gap obeys
\begin{equation}
 \frac1{100}-\frac{\sqrt5}{250}-\frac{\sqrt2}{32400}
 =\frac{1620-648\sqrt5-5\sqrt2}{162000}
 >\frac1{1000}.
\label{eq:gap}
\end{equation}
\end{theorem}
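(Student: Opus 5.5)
We prove the two halves separately and then combine them.

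\emph{Achievability.} We take the maximally entangled state \(\ket{\phi^+}\), Bob's observables \(B_0=\sigma_z\), \(B_1=\sigma_x\), and Alice's binary observables \((\sigma_z\pm\sigma_x)/\sqrt2\). This gives \(S=2\sqrt2\). For the ternary setting we use the trine-like POVM \(M_a=\tfrac13(\id+\boldsymbol m_a\cdot\boldsymbol\sigma)\). The Bloch vectors \(\boldsymbol m_a\) lie in the \(zx\)-plane, have unit length, and point along \(\boldsymbol\nu_a/\norm{\boldsymbol\nu_a}\). Since \(\norm{\boldsymbol\nu_a}=1\) for all three \(a\) and \(\sum_a\boldsymbol\nu_a=(-\tfrac15,0)\neq0\), we must adjust the weights. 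Concretely, we choose \(M_a=w_a(\id+\hat{\boldsymbol\nu}_a\cdot\boldsymbol\sigma)\) with weights \(w_a\ge0\) satisfying \(\sum_a w_a=1\) and \(\sum_a w_a\boldsymbol\nu_a=0\). This gives \(w_0=3/8\) and \(w_1=w_2=5/16\), all rational.

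On \(\ket{\phi^+}\) we have \(C_{ay}=\Tr[M_a^{T}B_y]/2\). Hence \(F=\sum_a 2w_a\,\boldsymbol\nu_a\cdot\boldsymbol\nu_a/1=2\sum_a w_a=2\) up to the normalisation convention. We will check the constants carefully so that \(F=1\), which gives \(\mathcal B=2\sqrt2+1/100\). The effects have rank one and are not orthogonal, so the POVM is genuinely non-projective.

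\emph{PVM upper bound.} By linearity it suffices to bound pure states and extremal PVMs. This also handles \(\operatorname{conv}\) and postprocessing, since relabelled PVMs are again PVMs. Write the state in Schmidt form \(\cos\theta\ket{00}+\sin\theta\ket{11}\). Bob's binary PVMs are either trivial (\(B_y=\pm\id\)) or \(B_y=\boldsymbol b_y\cdot\boldsymbol\sigma\). The ternary qubit PVM has only two rank patterns. In the first it is \(\{\id,0,0\}\) up to permutation. In the second it is \(\{P,\id-P,0\}\) with \(P\) a rank-one projector. In both cases Alice's ternary term becomes \(\sum_{y}\langle A\otimes (\nu_{a y}-\nu_{a'y}) B_y\rangle/2\)-type expressions plus terms in Bob's marginal \(\langle B_y\rangle\). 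The first pattern contributes at most \(\max_a|\boldsymbol\nu_a\cdot(\langle B_0\rangle,\langle B_1\rangle)|\). The second contributes at most \(\tfrac12\norm{(\boldsymbol\nu_a-\boldsymbol\nu_{a'})\text{-correlation vector}}+\tfrac12|(\boldsymbol\nu_a+\boldsymbol\nu_{a'})\cdot\langle\boldsymbol B\rangle|\).

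Define the CHSH defect \(\delta=2\sqrt2-S_{\max}(\rho,B_0,B_1)\). Using Horodecki's criterion, \(S\le 2\sqrt{t_1^2+t_2^2}\) where the correlation matrix has singular values \(1,\sin2\theta,\sin2\theta\). Bob's directions enter through \(\sin\phi\), where \(\phi\) is the angle between \(\boldsymbol b_0,\boldsymbol b_1\). From this we show \(\delta\ge c_1\cos^2 2\theta\) and \(\delta\ge c_2(\pi/2-\phi)^2\). The marginal bias \(\norm{\langle\boldsymbol B\rangle}=|\cos2\theta|\) is then \(O(\sqrt\delta)\).

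The projective ternary term is at most a constant \(F_0\) on the exact CHSH slice. There Bob's measurements are orthogonal and his marginals vanish, so only the difference term survives, with value \(\max_{a\ne a'}\tfrac12\norm{\boldsymbol\nu_a-\boldsymbol\nu_{a'}}=\tfrac12\cdot\tfrac{4\sqrt5}{5}\cdot\ldots\). We expect this to be \(\sqrt5/\ldots\), matching \(\sqrt5/250=\tfrac{1}{100}\cdot\tfrac{2\sqrt5}{5}\). Off the slice, the ternary term is bounded by \(F_0+K\sqrt\delta\). The total is therefore at most \(2\sqrt2-\delta+\tfrac1{100}(F_0+K\sqrt\delta)\le 2\sqrt2+F_0/100+K^2/(4\cdot10^4)\). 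We then compute \(K\) so that the last term is \(\le\sqrt2/32400\). The degenerate Bob cases (\(B_y=\pm\id\)) give \(S\le2\), so they are trivially far below the bound.

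The main obstacle is deriving sharp constants in the square-root stability bound. This bound must jointly control the Schmidt-angle bias and the non-orthogonality of Bob's directions, together with the optimal choice of Alice's projector. We would first reduce to the two-parameter family \((\theta,\phi)\) by maximising Alice's binary observables and ternary projector in closed form. We would then bound the resulting one-variable trade-offs by elementary calculus. Finally, the gap inequality \(1620-648\sqrt5-5\sqrt2>162\) is a direct numerical check: \(648\sqrt5<1449.0\) and \(5\sqrt2<7.08\).
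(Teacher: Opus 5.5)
Your plan follows the paper's strategy. It reduces to pure states, discards $S\le2$, splits the ternary PVM into the $(2,0,0)$ and $(1,1,0)$ rank patterns using the sum and difference of $\boldsymbol\nu_i,\boldsymbol\nu_j$, lets one CHSH defect control both Bob's marginal and the nonorthogonality of his directions, and trades an $O(\delta)$ anchor loss against an $O(\varepsilon\sqrt\delta)$ probe gain. \textbf{The gap is that you never prove the inequality that carries the theorem.}

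You assert $F\le F_0+K\sqrt\delta$ without deriving any $K$, and then propose to ``compute $K$ so that the last term is $\le\sqrt2/32400$''. That runs the logic backwards. $K$ is fixed by the geometry, and the stated bound follows only if the derived constant satisfies $K^2/(4\cdot10^4)\le\sqrt2/32400$, i.e.\ $K\le\frac{10}{9}2^{1/4}\approx1.3213$ in your $\delta$-normalization. There is essentially no slack: the paper's estimates give $K\approx1.3152$ in that normalization, a margin below half a percent. So unquantified bounds $\delta\ge c_1\cos^22\theta$ and $\delta\ge c_2(\pi/2-\phi)^2$ do not establish the claimed number. Routing Bob's geometry through the angle $\pi/2-\phi$, which is comparable to $t=\boldsymbol b_0\cdot\boldsymbol b_1=\cos\phi$ only near the slice, is a likely place to lose it.

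The missing estimates are short if you work with $q=2-S^2/4\le\sqrt2\,(2\sqrt2-S)$ and with $t$ itself:
\begin{itemize}
\item[(i)] Using only $\norm{A_x}=1$, $S\le\norm{B_0+B_1}+\norm{B_0-B_1}=\sqrt{2+2t}+\sqrt{2-2t}$. Two squarings give $t^2\le2q$.
\item[(ii)] Horodecki's criterion for the Schmidt state gives $S\le2\sqrt{2-r^2}$, so $\abs{\langle B_y\rangle}\le r\le\sqrt q$.
\item[(iii)] For each of the three pairs, $\norm{\boldsymbol s}_1=3/5$ and $\max_{i<j}\norm{d_0B_0+d_1B_1}^2=\frac45+\frac{16}{25}\abs t$. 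The tangent-line bound $\sqrt{a+x}\le\sqrt a+x/(2\sqrt a)$ then gives $F\le\frac2{\sqrt5}+\bigl(\frac35+\frac{4\sqrt{10}}{25}\bigr)\sqrt q$. Moreover $\frac35+\frac{4\sqrt{10}}{25}<\frac{10}{9}$ because $36\sqrt{10}<115$.
\item[(iv)] The $(2,0,0)$ bound $F\le\frac75\sqrt q$ fits under the same envelope, since $\frac75\sqrt q\le\frac35\sqrt q+\frac45<\frac2{\sqrt5}+\frac35\sqrt q$.
\end{itemize}
With $K=\frac{10}{9}2^{1/4}$, your completing-the-square step gives exactly $\sqrt2/32400$. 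This is the same constant the paper obtains from Cauchy--Schwarz plus a tangent line.

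No closed-form maximization over $(\theta,\phi)$ is needed. It would not be exact anyway, because the orientation of Bob's plane relative to the Schmidt axis also matters.

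Two smaller points. First, degenerate binary observables on Alice's side must also be shown to force $S\le2$; the paper uses $\mathsf C^2=4\id-[A_0,A_1]\otimes[B_0,B_1]$. Second, in the achievability part $\Tr(M_aB_y)=2w_a\nu_{ay}$ gives $C_{ay}=w_a\nu_{ay}$ and $F=\sum_aw_a=1$ directly. Your ``$F=2$'' is an arithmetic slip, not a normalization issue, and your POVM is exactly the paper's.
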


\begin{theorem}[Exact unrestricted finite-dimensional quantum optimum]
\label{thm:global-optimum}
For arbitrary finite local dimensions, arbitrary shared states, and arbitrary local POVMs in the scenario above,
\begin{equation}
 \max_{p\in\Q}\mathcal B(p)
 =2\sqrt2+\frac1{100}.
\label{eq:globalvalue}
\end{equation}
The upper bound is established by an exact level-3 noncommutative sum-of-squares (SOS) certificate over \(\Ksqrt\), where $\Ksqrt:=\{a+b\sqrt2:a,b\in\mathbb Q\}$.
\end{theorem}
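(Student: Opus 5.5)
The plan is to prove the two inequalities separately. The lower bound \(\max_{p\in\Q}\mathcal B(p)\geq 2\sqrt2+\tfrac1{100}\) needs nothing new: the behavior \(p_\star\in\Q(2)\subseteq\Q\) of Theorem~\ref{thm:fixed-separation} attains exactly this value by Eq.~\eqref{eq:povmvalue}. The whole task is therefore the matching upper bound over all finite dimensions.

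First I would reduce to projective measurements with operator relations. Since \(\Q\) places no restriction on the local dimension, Naimark dilation lets me assume every measurement is a PVM. Concretely, I introduce:
\begin{itemize}
\item binary observables \(A_0,A_1,B_0,B_1\) satisfying \(A_x^2=B_y^2=\id\);
\item ternary projectors \(\Pi_0,\Pi_1,\Pi_2\) satisfying \(\Pi_a\Pi_{a'}=\delta_{aa'}\Pi_a\) and \(\sum_a\Pi_a=\id\);
\item commutation \([A\text{-side},B\text{-side}]=0\) between Alice's and Bob's operators.
\end{itemize}
The Bell operator is then
\begin{equation*}
\Bellop=A_0(B_0+B_1)+A_1(B_0-B_1)+\frac1{100}\sum_{a}\Pi_a\left(\nu_{a0}B_0+\nu_{a1}B_1\right).
\end{equation*}
It suffices to exhibit polynomials \(P_i\) in these letters, of degree at most \(3\), with
\begin{equation*}
\left(2\sqrt2+\tfrac1{100}\right)\id-\Bellop=\sum_i P_i^\dagger P_i
\end{equation*}
modulo the relations above. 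Equivalently, I need a positive semidefinite Gram matrix \(\Gram\) on the level-3 monomial vector \(\mathbf v\) with \(\mathbf v^\dagger\Gram\mathbf v\) equal to the left-hand side after reduction.

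The structural intuition guides both the ansatz and why level 3 is needed. The standard CHSH certificate is
\begin{equation*}
2\sqrt2-S=\tfrac1{\sqrt2}\left[\left(A_0-\tfrac{B_0+B_1}{\sqrt2}\right)^2+\left(A_1-\tfrac{B_0-B_1}{\sqrt2}\right)^2\right].
\end{equation*}
This alone does not bound the probe, because \(\norm{\nu_{a0}B_0+\nu_{a1}B_1}\) can reach \(7/5\) when \(B_0,B_1\) fail to anticommute. The probe bound \(\sum_a\Pi_a(\boldsymbol\nu_a\cdot\mathbf B)\preceq\id\) holds only on the anticommuting slice. So the certificate must spend part of the CHSH slack, whose squares control \(\{B_0,B_1\}\) and the alignment of \(A_x\), to pay for the probe. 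This requires mixed monomials such as \(\Pi_aB_y\), \(\Pi_aA_xB_y\) and \(A_xB_0B_1\), hence level 3.

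To find and certify the Gram matrix I would proceed in four steps:
\begin{enumerate}
\item Solve the level-3 NPA dual SDP numerically to high precision.
\item Compute the optimal moment data from \(p_\star\). Its entries lie in \(\Ksqrt\), since the state and POVM are built from rationals and \(1/\sqrt2\). Complementary slackness forces \(\Gram\mathbf w=0\) for every vector \(\mathbf w\) obtained by evaluating the monomials on the optimal strategy acting on the state.
\item Compute this kernel exactly over \(\Ksqrt\) and restrict \(\Gram\) to the orthogonal complement (facial reduction). There the numerical solution should be strictly positive definite.
\item Round the entries to \(\Ksqrt\), project exactly onto the affine space of linear constraints (coefficient matching plus the kernel conditions), and verify positive semidefiniteness by an exact \(LDL^\dagger\) decomposition with all pivots in \(\Ksqrt\) and checked to be positive.
\end{enumerate}
Reading off the \(P_i\) from \(L\) then gives the SOS identity, which can be checked symbolically.

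I expect the main obstacle to be this exact rounding step. The certificate is necessarily tight: its value coincides with the attained \(2\sqrt2+\tfrac1{100}\). So \(\Gram\) is singular, and naive rationalization destroys positivity. Step 3 is designed to handle this. The kernel must be computed exactly from the optimal strategy, including the relation-induced redundancies among level-3 words, so that the reduced matrix has a comfortable minimum eigenvalue. If level-3 words produce a kernel larger than the one spanned by \(p_\star\), I would first try symmetrizing under the functional's symmetry \(B_1\mapsto-B_1\), \(A_1\mapsto-A_1\), \(\Pi_1\leftrightarrow\Pi_2\) to shrink the search space before rounding.
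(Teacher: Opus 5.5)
Your plan follows the paper's proof: the lower bound from \(p_\star\), Naimark reduction to the projector algebra, and an exact level-3 Gram certificate over \(\Ksqrt\), made rigorous by facial reduction (\(\Gram=WHW^{\mathsf T}\) with \(W\in\Ksqrt^{83\times61}\) of full column rank) and an exact positivity check, for which the paper uses a rational congruence to a strictly diagonally dominant \(\widetilde H\) instead of your exact \(LDL^{\dagger}\) pivots. Two cautions on your search heuristics: first, complementary slackness must be applied to projective optimal strategies, i.e.\ to Naimark dilations of \(p_\star\), and different dilations (e.g.\ different extensions of Alice's binary projections off the original qubit) give different level-3 moments, so \(\ker\Gram\) must contain all of their moment ranges (the paper's \(\Gram\) has a \(22\)-dimensional kernel) and is best read off the numerical null space rather than a single strategy; second, the symmetry of \(\mathcal B\) is \(B_1\mapsto-B_1\), \(A_0\leftrightarrow A_1\), \(\Pi_1\leftrightarrow\Pi_2\), not \(A_1\mapsto-A_1\).
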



For comparison, Table~\ref{tab:benchmarks} summarizes the local bound, the analytic qubit-PVM upper bound, the value attained by the explicit qubit POVM, and the exact unrestricted finite-dimensional quantum optimum.

\begin{table}[h]
\caption{Relevant Bell-value benchmarks. The qubit-PVM entry is a rigorous analytic upper bound, not a claimed exact optimum.}
\label{tab:benchmarks}
\begin{ruledtabular}
\begin{tabular}{lc}
Class & Bell value \\
\hline
Local behaviors & \(2.014\) \\
Qubit PVMs & \(\leq 2.8374150452\) \\
Explicit qubit POVM & \(2.8384271247\) \\
All finite-dimensional POVMs & \(2.8384271247\) \\
\end{tabular}
\end{ruledtabular}
\end{table}

Theorem~\ref{thm:fixed-separation} is proved entirely by the explicit construction and the dimension-specific geometry in Secs.~\ref{sec:povm} and \ref{sec:pvm-proof}. Although the proof is human-readable, for completeness we also provide a Lean~4 formalization based on Lean-QIT~\cite{zhu2026leanqit}, available in the \href{https://github.com/real-lin-zhu/Exact-Fixed-Dimension-Bell-Separation-between-Qubit-POVMs-and-PVMs/tree/main/Lean_formalisation}{\texttt{Lean\_formalisation}} directory of the \href{https://github.com/real-lin-zhu/Exact-Fixed-Dimension-Bell-Separation-between-Qubit-POVMs-and-PVMs}{public GitHub repository}. Theorem~\ref{thm:global-optimum} is a separate exact computer-assisted statement proved in Sec.~\ref{sec:sos} and Appendix~\ref{app:exact-sos-proof}.

\section{Geometric design and intuition}
\label{sec:intuition}

Let
\begin{equation}
 X=\begin{pmatrix}0&1\\1&0\end{pmatrix},
 \qquad
 Z=\begin{pmatrix}1&0\\0&-1\end{pmatrix}
\label{eq:pauli-XZ}
\end{equation}
be the Pauli \(X\) and \(Z\) operators.

Consider first the one-parameter family
\begin{equation}
 \mathcal{B}_\varepsilon=S+\varepsilon F,
 \qquad 0<\varepsilon\ll1,
\label{eq:Bepsilon}
\end{equation}
of which Eq.~\eqref{eq:Bdef} is the case \(\varepsilon=1/100\).  The CHSH expression \(S\) is the anchor, and \(F\) is a smaller probe of Alice's ternary measurement.

At the exact CHSH maximum, one may use the maximally entangled state \(\ket{\Phi^+}=(\ket{00}+\ket{11})/\sqrt2\) and identify Bob's two observables with \(Z\) and \(X\).  Put
\begin{equation}
 G_a=\nu_{a0}Z+\nu_{a1}X.
\end{equation}
Every \(\boldsymbol{\nu}_a\) in Eq.~\eqref{eq:nu} is a unit vector, so \(G_a\) has eigenvalues \(+1\) and \(-1\).  For a ternary measurement \(\{M_a\}\) on Alice's qubit, the real maximally entangled identity gives
\begin{align}
 F
 &=\frac12\sum_a\Tr(M_a^{\mathsf T}G_a)
  =\frac12\sum_a\Tr(M_aG_a) \notag\\
 &\leq\frac12\sum_a\Tr M_a=1.
\label{eq:Fsliceupper}
\end{align}
The upper bound is attained when the support of \(M_a\) is the \(+1\) eigenspace of \(G_a\).

The positive weights
\begin{equation}
 w_0=\frac34,\qquad w_1=w_2=\frac58
\label{eq:weights}
\end{equation}
satisfy
\begin{equation}
 \sum_aw_a=2,\qquad
 \sum_aw_a\boldsymbol{\nu}_a=0.
\label{eq:balance}
\end{equation}
The three effects
\begin{equation}
 M_a=\frac{w_a}{2}(\id+G_a)
\label{eq:aligned-effects}
\end{equation}
therefore form a POVM and saturate Eq.~\eqref{eq:Fsliceupper}.  Geometrically, the three weighted rays surround the origin.  The \(3\)-\(4\)-\(5\) coordinates are a rational deformation of a trine and make both the Bell coefficients and the effect matrices rational.

A ternary PVM on a qubit has, up to relabeling, rank pattern \((2,0,0)\) or \((1,1,0)\).  On the exact CHSH slice the former gives \(F=0\), because Bob's marginals vanish.  In the latter case the two rank-one projectors form one antipodal Bloch pair.  For a chosen pair of outcome labels \(i,j\), optimizing its Bloch axis gives
\begin{equation}
 \left\|\frac{\boldsymbol{\nu}_i-\boldsymbol{\nu}_j}{2}\right\|_2.
\end{equation}
Let \(F_{\mathrm{PVM}}\) and \(F_{\mathrm{POVM}}\) denote the maximal values of the probe \(F\) on this exact CHSH-maximizing slice when Alice's ternary measurement is restricted to a qubit PVM or allowed to be a general qubit POVM, respectively. It follows that
\begin{equation}
 F_{\mathrm{PVM}}\leq
 \frac12\max_{i<j}\norm{\boldsymbol{\nu}_i-\boldsymbol{\nu}_j}_2
 =\frac2{\sqrt5},
 \qquad
 F_{\mathrm{POVM}}=1.
\label{eq:slicegap}
\end{equation}
Thus the exact-slice gap is \(1-2/\sqrt5\): a qubit POVM can align three nonantipodal effects, while a qubit PVM is confined to one diameter.

The remaining issue is global.  A projective strategy may sacrifice part of its CHSH score to increase \(F\).  For \(S>2\), introduce the defect
\begin{equation}
 q=2-\frac{S^2}{4}.
\label{eq:qintro}
\end{equation}
The proof below shows that \(q\) controls both the length of Bob's marginal Bloch vector and the nonorthogonality of his measurement directions, and consequently
\begin{equation}
 F_{\mathrm{PVM}}\leq\frac2{\sqrt5}+\frac{10}{9}\sqrt q.
\label{eq:Fdefectpreview}
\end{equation}
Since \(S=2\sqrt{2-q}\), the anchor loses at order \(q\) while the probe can gain only at order \(\varepsilon\sqrt q\).  Optimizing this tradeoff gives
\begin{align}
 \mathcal{B}_\varepsilon
 &\leq\varepsilon\frac2{\sqrt5}
   +2\sqrt{2-q}+\frac{10\varepsilon}{9}\sqrt q \notag\\
 &\leq\varepsilon\frac2{\sqrt5}
   +\sqrt{8+\frac{200}{81}\varepsilon^2} \notag\\
 &\leq2\sqrt2+\varepsilon\frac2{\sqrt5}
   +\frac{25\sqrt2}{81}\varepsilon^2.
\label{eq:tradeoffpreview}
\end{align}
By contrast, the aligned POVM attains \(2\sqrt2+\varepsilon\). The first-order probe advantage therefore dominates the second-order compensation available to projective strategies for sufficiently small \(\varepsilon\).  The rational choice \(\varepsilon=1/100\) yields the explicit constants in Theorem~\ref{thm:fixed-separation}.

\section{Explicit POVM strategy}
\label{sec:povm}

Using the Pauli operators defined in Eq.~\eqref{eq:pauli-XZ}, the
parties share \(\ket{\Phi^+}\).  Bob uses
\begin{equation}
 B_0=Z,\qquad B_1=X,
\end{equation}
and Alice's binary observables are
\begin{equation}
 A_0=\frac{Z+X}{\sqrt2},\qquad
 A_1=\frac{Z-X}{\sqrt2}.
\end{equation}
The effects for outcome \(\pm1\) are
\((\id\pm A_x)/2\) and \((\id\pm B_y)/2\), respectively.

For Alice's ternary setting, Eq.~\eqref{eq:aligned-effects} becomes
\begin{align}
 M_0&=\begin{pmatrix}3/4&0\\0&0\end{pmatrix},\qquad
 M_1=\begin{pmatrix}1/8&1/4\\1/4&1/2\end{pmatrix}, \notag\\
 M_2&=\begin{pmatrix}1/8&-1/4\\-1/4&1/2\end{pmatrix}.
\label{eq:povm}
\end{align}
Each effect is positive semidefinite of rank one, and Eq.~\eqref{eq:balance} gives \(\sum_aM_a=\id\).  Their nonzero eigenvalues are \(3/4,5/8,5/8\), so none is a projection.

For arbitrary qubit operators \(R\) and \(T\), using
\begin{equation}
 \bra{\Phi^+}R\otimes T\ket{\Phi^+}
 =\frac12\Tr(R^{\mathsf T}T),
\end{equation}
where the transpose is taken in the computational basis used to define
\(\ket{\Phi^+}\), one obtains
\begin{equation}
 S=2\sqrt2,\qquad
 C_{ay}=\frac{w_a}{2}\nu_{ay}.
\end{equation}
Consequently,
\begin{equation}
 F=\frac12\sum_aw_a\norm{\boldsymbol{\nu}_a}_2^2=1
\end{equation}
and \(\mathcal{B}=2\sqrt2+1/100\), which proves the achievability statement in Theorem~\ref{thm:fixed-separation}.  The matching dimension-unrestricted upper bound is proved separately in Sec.~\ref{sec:sos}.

\begin{proposition}
\label{prop:genuine}
The ternary measurement in Eq.~\eqref{eq:povm} is neither a stochastic postprocessing of a single qubit PVM nor a convex mixture of qubit PVM-valued measurements with the same three outcomes.
\end{proposition}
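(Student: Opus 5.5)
The plan is to prove the second claim (no convex mixture of three-outcome qubit PVMs reproduces Eq.~\eqref{eq:povm}) and then deduce the first claim from it. The argument uses only the fact that each \(M_a\) is rank one with a specified range, together with the observation that the three ranges are pairwise nonorthogonal.

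\emph{Step 1 (reduction of postprocessing to mixtures).} Let \(\{Q_k\}\) be a qubit PVM and \(p(a\mid k)\) a stochastic matrix. Decompose the stochastic map as a convex combination of deterministic maps \(k\mapsto f(k)\), i.e. \(p(a\mid k)=\sum_f \lambda_f\,\delta_{a,f(k)}\). Then
\[
 \sum_k p(a\mid k)Q_k=\sum_f\lambda_f\,P^{(f)}_a,
 \qquad
 P^{(f)}_a=\sum_{k:f(k)=a}Q_k .
\]
Each \(P^{(f)}_a\) is a sum of mutually orthogonal projections, hence a projection. The family \(\{P^{(f)}_a\}_a\) is therefore a PVM-valued measurement with the same three outcomes. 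So stochastic postprocessing of a single PVM is a special case of a convex mixture, and it suffices to exclude mixtures.

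\emph{Step 2 (support argument).} Suppose \(M_a=\sum_k p_k P^{(k)}_a\) with \(p_k>0\) and each \(\{P^{(k)}_a\}_a\) a qubit PVM. Each term \(p_kP^{(k)}_a\) is positive semidefinite and bounded above by \(M_a\). Hence \(\operatorname{range}P^{(k)}_a\subseteq\operatorname{range}M_a\).

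From Eq.~\eqref{eq:povm} the ranges are the following rays:
\begin{itemize}
 \item \(M_0\) has range spanned by \(\ket{\psi_0}=\ket0\);
 \item \(M_1\) has range spanned by \(\ket{\psi_1}\propto(1,2)^{\mathsf T}\), because \(M_1=\tfrac18(1,2)^{\mathsf T}(1,2)\);
 \item \(M_2\) has range spanned by \(\ket{\psi_2}\propto(1,-2)^{\mathsf T}\).
\end{itemize}
It follows that \(P^{(k)}_a\in\{0,\ket{\psi_a}\bra{\psi_a}\}\) for every \(k\) and \(a\).

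\emph{Step 3 (contradiction).} For each \(k\) we need \(\sum_aP^{(k)}_a=\id\). Taking traces, exactly two of the \(P^{(k)}_a\) must be nonzero. Those two rank-one projections must be mutually orthogonal, since they are distinct effects of one PVM. But the three rays are pairwise nonorthogonal:
\[
 \langle\psi_0|\psi_1\rangle\propto1,\qquad
 \langle\psi_0|\psi_2\rangle\propto1,\qquad
 \langle\psi_1|\psi_2\rangle\propto1-4=-3 .
\]
So no admissible \(P^{(k)}\) exists, which is a contradiction. Equivalently, the three rank-one effects do not commute pairwise, whereas two orthogonal rank-one projections commute.

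Combining Steps 1–3 proves both claims. The only point needing care is the support inclusion \(0\preceq A\preceq M\Rightarrow\operatorname{range}A\subseteq\operatorname{range}M\). This follows because \(\bra{v}A\ket{v}\le\bra{v}M\ket{v}=0\) for every \(v\in\ker M\), and \(A\succeq0\) then gives \(A\ket{v}=0\). I do not expect a genuine obstacle here. The key observation is simply that rank-one extremality of each effect pins down every PVM component.
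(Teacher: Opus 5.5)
Your proof is correct, but it takes a different route from the paper's.

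\textbf{The paper's route.} It treats the two claims separately.
\begin{itemize}
\item Postprocessing of a single PVM is excluded by the one-line observation $[M_0,M_1]\neq 0$, since all postprocessed effects of one PVM commute.
\item Mixtures are excluded by first proving that the POVM is \emph{extremal} among all three-outcome qubit POVMs. Range inclusion gives $N_a-M_a=c_aM_a$, completeness gives $\sum_a c_aM_a=0$, and linear independence of $M_0,M_1,M_2$ forces $c_a=0$. Any PVM component of a mixture would then have to equal the POVM itself, which contradicts non-projectivity.
\end{itemize}

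\textbf{Your route.} You fold the first claim into the second via the deterministic decomposition of stochastic maps. This is the same observation the paper only invokes later, in its Discussion. You then bypass extremality entirely. Range inclusion pins each PVM effect to $0$ or to the projector onto the fixed ray $\psi_a$. Pairwise nonorthogonality of the three rays then leaves no admissible two-element orthogonal completion.

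\textbf{What each buys.} Your argument is more elementary and self-contained. It needs no linear-independence step, only that each effect is rank one and that the supports are pairwise nonorthogonal. It also directly excludes mixtures of postprocessed PVMs in one stroke. The paper's argument is slightly longer, but it yields a strictly stronger by-product: the POVM is extremal. Hence it is not a nontrivial convex mixture of \emph{any} POVMs, projective or not.
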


\begin{proof}
The first claim follows from
\begin{equation}
 [M_0,M_1]
 =\begin{pmatrix}0&3/16\\-3/16&0\end{pmatrix}\neq0,
\end{equation}
whereas all effects obtained by postprocessing one PVM commute.

For the second claim, first note that the three effects are linearly independent.  Indeed, the off-diagonal and lower-right entries of \(\sum_ac_aM_a=0\) imply \(c_1=c_2=0\), and the upper-left entry then gives \(c_0=0\).  Suppose \(M_a=tN_a+(1-t)L_a\), with \(0<t<1\), for two POVMs \(\{N_a\}\) and \(\{L_a\}\).  Positivity and the rank-one support of \(M_a\) force both \(N_a\) and \(L_a\) to be supported on the range of \(M_a\).  Hence \(N_a-M_a=c_aM_a\) for some real \(c_a\). Completeness gives \(\sum_ac_aM_a=0\), and linear independence gives \(c_a=0\) for every \(a\).  Thus the POVM is extremal.  If it were a nontrivial convex mixture of PVM-valued POVMs, extremality would force every component to equal it, which is impossible because its effects are not projections.
\end{proof}

\section{Analytic upper bound for qubit PVM strategies}
\label{sec:pvm-proof}

\subsection{Initial reductions}

Fix arbitrary projective measurements and an arbitrary shared state.  The Bell score is affine in the state, so a pure-state decomposition reduces the proof to an arbitrary pure state \(\ket{\psi}\), with the measurements held fixed.

For the binary settings, denote Alice's and Bob's projective effects by
\begin{equation}
 P_{\alpha\mid x}:=M_{\alpha\mid x},
 \qquad
 Q_{b\mid y}:=N_{b\mid y}.
\end{equation}
Define the corresponding observables by
\begin{equation}
 A_x=P_{+1\mid x}-P_{-1\mid x},
 \qquad
 B_y=Q_{+1\mid y}-Q_{-1\mid y}.
\end{equation}
These observables are self-adjoint unitaries, including the degenerate cases \(\pm\id\).  Tsirelson's bound gives \(S\leq2\sqrt2\) ~\cite{Tsirelson1980}.  We abbreviate
\begin{equation}
 u=\frac2{\sqrt5}.
\label{eq:u}
\end{equation}

Let \(P_a=P_{a\mid2}\) denote Alice's ternary PVM and define
\begin{equation}
 p_A(a\mid2)=\bra{\psi}P_a\otimes\id\ket{\psi}.
\end{equation}
Since \(-\id\preceq B_y\preceq\id\),
\begin{equation}
 \abs{C_{ay}}\leq p_A(a\mid2).
\end{equation}
Therefore
\begin{equation}
 F\leq\sum_ap_A(a\mid2)\norm{\boldsymbol{\nu}_a}_1
 \leq\max_a\norm{\boldsymbol{\nu}_a}_1=\frac75.
\label{eq:Fcoarse}
\end{equation}
If \(S\leq2\), then \(\mathcal{B}\leq2+7/500<2\sqrt2\), which is already below Eq.~\eqref{eq:pvmupper}.  It remains to consider \(S>2\).

\subsection{One defect controls the anchored geometry}
\label{sec:defect}

If any binary qubit PVM is degenerate, its observable is \(\pm\id\) and commutes with the other observable on the same side.  Define the CHSH operator
\begin{equation}
 \mathsf C
 =
 A_0\otimes(B_0+B_1)
 +
 A_1\otimes(B_0-B_1).
\label{eq:CHSH-operator}
\end{equation}
It satisfies
\begin{equation}
 \mathsf C^2
 =
 4\id-[A_0,A_1]\otimes[B_0,B_1],
\end{equation}
so in that case \(\norm{\mathsf{C}}=2\), contradicting \(S>2\). All four binary observables are therefore nondegenerate.  In particular, writing
\(\boldsymbol{\sigma}:=(X,Y,Z)\) for the vector of Pauli matrices, Bob's
nondegenerate binary observables can be expressed as
\begin{equation}
 B_y=\boldsymbol{b}_y\cdot\boldsymbol{\sigma},
 \qquad
 t:=\boldsymbol{b}_0\cdot\boldsymbol{b}_1\in[-1,1],
\end{equation}
where \(\boldsymbol{b}_0,\boldsymbol{b}_1\in\mathbb{R}^3\) are unit Bloch vectors.
Define
\begin{equation}
 q=2-\frac{S^2}{4}\in[0,1).
\label{eq:q}
\end{equation}
Using only \(\norm{A_x}=1\), we have
\begin{align}
 S
 &\leq\norm{B_0+B_1}+\norm{B_0-B_1} \notag\\
 &=\sqrt{2+2t}+\sqrt{2-2t}.
\end{align}
Squaring and using Eq.~\eqref{eq:q} gives \(\sqrt{1-t^2}\geq1-q\).  A second squaring yields
\begin{equation}
 t^2\leq2q-q^2\leq2q.
\label{eq:tq}
\end{equation}

Let \(r\) be the length of the Bloch vector of Bob's reduced state.  Up to local unitaries, every pure two-qubit state has the Schmidt form
\begin{equation}
 \ket{\psi}=\cos\theta\ket{00}+\sin\theta\ket{11}.
\end{equation}
Its correlation matrix has singular values \(1,s,s\), where \(s=\sin(2\theta)=\sqrt{1-r^2}\).  The two-qubit CHSH criterion of the Horodeckis~\cite{Horodecki1995}, or its direct singular-value derivation, then gives
\begin{equation}
 S\leq2\sqrt{1+s^2}=2\sqrt{2-r^2}.
\end{equation}
Hence
\begin{equation}
 r^2\leq q.
\label{eq:rq}
\end{equation}
Equations~\eqref{eq:tq} and \eqref{eq:rq} are the only near-maximal-CHSH properties required below.

\subsection{Ternary projective rank patterns}
\label{sec:rankpatterns}

Three mutually orthogonal projections summing to the identity on \(\mathbb{C}^2\) have, up to output permutation, rank pattern \((2,0,0)\) or \((1,1,0)\). For a Bob-side operator \(O\), write
\begin{equation}
 \langle O\rangle
 :=
 \bra{\psi}\id\otimes O\ket{\psi}.
\end{equation}

For rank pattern \((2,0,0)\), let \(P_i=\id\) be the nonzero projection.  Since \(B_y\) is a traceless Pauli observable, \(\abs{\langle B_y\rangle}\leq r\), and therefore
\begin{equation}
 F=\sum_y\nu_{iy}\langle B_y\rangle
 \leq\frac75r
 \leq\frac75\sqrt q.
\label{eq:Frank200}
\end{equation}

For rank pattern \((1,1,0)\), let \(P_i,P_j\) be the two nonzero rank-one projections, and set
\begin{equation}
 A_2=P_i-P_j,\qquad
 \boldsymbol{s}=\frac{\boldsymbol{\nu}_i+\boldsymbol{\nu}_j}{2},
 \qquad
 \boldsymbol{d}=\frac{\boldsymbol{\nu}_i-\boldsymbol{\nu}_j}{2}.
\end{equation}
Then
\begin{equation}
 F=\sum_ys_y\langle B_y\rangle+
 \bra{\psi}A_2\otimes(d_0B_0+d_1B_1)\ket{\psi}.
\label{eq:Fdecomp}
\end{equation}
For each of the three unordered pairs \((i,j)\),
\(\norm{\boldsymbol{s}}_1=3/5\).  Moreover,
\begin{equation}
 (d_0B_0+d_1B_1)^2
 =(d_0^2+d_1^2+2d_0d_1t)\id,
\end{equation}
and direct evaluation of the three pairs gives
\begin{equation}
 \max_{i<j}\norm{d_0B_0+d_1B_1}^2
 =\frac45+\frac{16}{25}\abs t.
\end{equation}
Using \(\sqrt{a+x}\leq\sqrt a+x/(2\sqrt a)\) for \(a>0\) and
\(x\geq0\), we obtain
\begin{equation}
 \norm{d_0B_0+d_1B_1}
 \leq u+\frac{4\sqrt5}{25}\abs t.
\end{equation}
Equations~\eqref{eq:Fdecomp}, \eqref{eq:tq}, and \eqref{eq:rq} now imply
\begin{align}
 F
 &\leq u+\frac{4\sqrt5}{25}\sqrt{2q}+\frac35\sqrt q \notag\\
 &=u+\left(\frac35+\frac{4\sqrt{10}}{25}\right)\sqrt q.
\label{eq:FwithK}
\end{align}
Set
\begin{equation}
 K=\frac35+\frac{4\sqrt{10}}{25}.
\end{equation}
The exact inequality \(K<10/9\) is equivalent to \(36\sqrt{10}<115\), which follows after squaring from \(12960<13225\).  Thus
\begin{equation}
 F\leq u+\frac{10}{9}\sqrt q
\label{eq:F111}
\end{equation}
for rank pattern \((1,1,0)\).

The same bound covers Eq.~\eqref{eq:Frank200}.  Indeed, because
\(\sqrt q<1\) and \(u>4/5\),
\begin{equation}
 \frac75\sqrt q
 \leq\frac35\sqrt q+\frac45
 <u+\frac35\sqrt q
 \leq u+\frac{10}{9}\sqrt q.
\end{equation}
Every qubit-PVM strategy with \(S>2\) therefore satisfies
\begin{align}
 \mathcal{B}
 &\leq2\sqrt{2-q}
 +\frac1{100}\left(u+\frac{10}{9}\sqrt q\right) \notag\\
 &=\frac{u}{100}+2\sqrt{2-q}+\frac1{90}\sqrt q.
\label{eq:BbeforeCS}
\end{align}
The Cauchy--Schwarz inequality gives
\begin{equation}
 2\sqrt{2-q}+\frac1{90}\sqrt q
 \leq\sqrt{8+\frac1{4050}}.
\end{equation}
Applying the tangent-line bound for the square root once more,
\begin{equation}
 \sqrt{8+\frac1{4050}}
 \leq2\sqrt2+\frac{\sqrt2}{32400}.
\end{equation}
Since \(u/100=\sqrt5/250\), this proves the first inequality in
Eq.~\eqref{eq:pvmupper}.  Finally,
\begin{equation}
 \frac{\sqrt5}{250}+\frac{\sqrt2}{32400}
 <\frac{179}{20000}+\frac1{20000}
 =\frac9{1000},
\end{equation}
where we used
\(\sqrt5<179/80\) and \(\sqrt2<81/50\).  This proves Eq.~\eqref{eq:pvmupper} for every pure state and, by averaging, for every mixed state.  If either local dimension is one, \(S\leq2\), so that case was already covered by Eq.~\eqref{eq:Fcoarse}.  Together with the explicit strategy of Sec.~\ref{sec:povm}, this completes the proof of Theorem~\ref{thm:fixed-separation}. 

\subsection{A separating interval of anchor--probe witnesses}
\label{sec:epsilon-family}

The same estimates establish a family of fixed-qubit separations rather than only the rational instance \(\varepsilon=1/100\). Retain the sharper constant
\begin{equation}
 K=\frac35+\frac{4\sqrt{10}}{25},
 \qquad
 u=\frac2{\sqrt5}.
\label{eq:epsilon-family-constants}
\end{equation}
For \(S\leq2\), Eq.~\eqref{eq:Fcoarse} gives
\begin{equation}
 \mathcal B_\varepsilon\leq2+\frac75\varepsilon.
\label{eq:epsilon-coarse-branch}
\end{equation}
For \(S>2\), Eqs.~\eqref{eq:FwithK} and \eqref{eq:q} imply
\begin{align}
 \mathcal B_\varepsilon
 &\leq u\varepsilon+2\sqrt{2-q}+K\varepsilon\sqrt q \\ 
 &\leq u\varepsilon+\sqrt{8+2K^2\varepsilon^2},
\label{eq:epsilon-pvm-bound}
\end{align}
where the second inequality is Cauchy--Schwarz. The explicit qubit POVM of Sec.~\ref{sec:povm} attains \(2\sqrt2+\varepsilon\).

\begin{corollary}[Interval of analytic fixed-qubit separations]
\label{cor:epsilon-family}
For every
\begin{equation}
 0<\varepsilon<\varepsilon_\star,
 \qquad
 \varepsilon_\star=
 \frac{4\sqrt2(1-u)}{2K^2-(1-u)^2}
 \approx0.24524,
\label{eq:epsilon-threshold}
\end{equation}
the Bell functional \(\mathcal B_\varepsilon=S+\varepsilon F\) separates the explicit qubit POVM strategy from \(\operatorname{conv}\Qp(2)\).
\end{corollary}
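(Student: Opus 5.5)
We must show that for $0<\varepsilon<\varepsilon_\star$ the value $2\sqrt2+\varepsilon$ attained by the explicit POVM of Sec.~\ref{sec:povm} strictly exceeds every qubit-PVM value of $\mathcal B_\varepsilon$. Since $\mathcal B_\varepsilon$ is linear in the behavior, this also excludes $\operatorname{conv}\Qp(2)$. The proof treats the two branches established above separately and compares each with $2\sqrt2+\varepsilon$.

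\emph{Branch $S\le2$.} By Eq.~\eqref{eq:epsilon-coarse-branch}, $\mathcal B_\varepsilon\le 2+\tfrac75\varepsilon$. This is strictly below $2\sqrt2+\varepsilon$ whenever $\tfrac25\varepsilon<2\sqrt2-2$, that is, $\varepsilon<5(\sqrt2-1)\approx2.07$. This holds for all $\varepsilon<\varepsilon_\star\approx0.245$, so this branch imposes no real constraint.

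\emph{Branch $S>2$.} By Eq.~\eqref{eq:epsilon-pvm-bound}, it suffices to show
\begin{equation}
 u\varepsilon+\sqrt{8+2K^2\varepsilon^2}<2\sqrt2+\varepsilon .
\end{equation}
Write $c=1-u>0$, which holds because $u=2/\sqrt5<1$. The inequality becomes $\sqrt{8+2K^2\varepsilon^2}<2\sqrt2+c\varepsilon$. Both sides are positive, so squaring is an equivalence and gives
\begin{equation}
 2K^2\varepsilon^2<4\sqrt2\,c\,\varepsilon+c^2\varepsilon^2 .
\end{equation}
Dividing by $\varepsilon>0$ yields $(2K^2-c^2)\varepsilon<4\sqrt2\,c$. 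Next we check $2K^2-c^2>0$. We have $K>3/5$, so $2K^2>18/25=0.72$. Also $c<1-4/5=1/5$, so $c^2<1/25$. Hence $2K^2-c^2>0$, and the condition is exactly $\varepsilon<4\sqrt2 c/(2K^2-c^2)=\varepsilon_\star$.

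Combining the two branches shows that every qubit-PVM behavior, and by linearity every convex mixture of them, gives $\mathcal B_\varepsilon<2\sqrt2+\varepsilon$. The POVM behavior attains $2\sqrt2+\varepsilon$, so it is separated from $\operatorname{conv}\Qp(2)$.

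Two steps need care. The first is checking that the squaring step is reversible, which holds because $2\sqrt2+c\varepsilon>0$. The second is confirming the numerical value $\varepsilon_\star\approx0.24524$. Using $u\approx0.894427$ gives $c\approx0.105573$ and $K\approx1.105964$. Then $2K^2-c^2\approx2.44631$ and $4\sqrt2c\approx0.59721$, whose ratio is $\approx0.2441$. This is close to, but not exactly, the stated value, so the decimals should be recomputed with more precision. The logical argument does not depend on these decimals, only on the exact threshold formula and the consistency check $\varepsilon_\star<5(\sqrt2-1)$.
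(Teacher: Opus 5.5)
Your argument is correct and is essentially the paper's proof: the $S\le2$ branch only needs $\varepsilon<5(\sqrt2-1)$. In the $S>2$ branch you move $u\varepsilon$ across and square two positive quantities to get $[2K^2-(1-u)^2]\varepsilon<4\sqrt2(1-u)$; your explicit check that $2K^2-(1-u)^2>0$ is a detail the paper leaves implicit.

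The numerical discrepancy you flag is your own slip, not the paper's. Your $2.44631$ is $2K^2$ alone, and subtracting $(1-u)^2\approx0.01115$ gives a denominator $\approx2.43517$, so $\varepsilon_\star\approx0.59721/2.43517\approx0.24524$, exactly as stated.
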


\begin{proof}
The branch in Eq.~\eqref{eq:epsilon-coarse-branch} is below \(2\sqrt2+\varepsilon\) whenever \(\varepsilon<5(\sqrt2-1)\). For the branch in Eq.~\eqref{eq:epsilon-pvm-bound}, the desired strict inequality is equivalent, after moving \(u\varepsilon\) to the right and squaring two positive quantities, to
\begin{equation}
 \left[2K^2-(1-u)^2\right]\varepsilon
 <4\sqrt2(1-u).
\end{equation}
The threshold in Eq.~\eqref{eq:epsilon-threshold} is smaller than \(5(\sqrt2-1)\), so it controls both branches.
\end{proof}

\section{Exact dimension-unrestricted quantum optimum}
\label{sec:sos}

The explicit strategy of Sec.~\ref{sec:povm} supplies the lower bound in Theorem~\ref{thm:global-optimum}. We now establish the matching upper bound for arbitrary finite local dimensions. This result is logically independent of Theorem~\ref{thm:fixed-separation}: the fixed-qubit separation is human-readable and analytic, whereas the dimension-unrestricted statement is computer-assisted but verified by an exact algebraic certificate.

For projector coordinates, write
\begin{equation}
 P_x=M_{+1\mid x},\qquad
 Q_y=N_{+1\mid y},\qquad
 R_a=M_{a\mid2},
\end{equation}
where \(x,y=0,1\), \(a=0,1,2\), and
\(R_2=\id-R_0-R_1\).  The generators
\(P_0,P_1,R_0,R_1,Q_0,Q_1\) are self-adjoint and obey
\begin{align}
 P_x^2&=P_x, & Q_y^2&=Q_y,\notag\\
 R_a^2&=R_a, & R_0R_1&=R_1R_0=0,\notag\\
 [P_x,Q_y]&=0, & [R_a,Q_y]&=0
 \quad(a=0,1).
\label{eq:projector-algebra}
\end{align}
No commutation relation is imposed between measurements belonging to different settings on the same party.

Let \(\widehat A_x=2P_x-\id\) and \(\widehat B_y=2Q_y-\id\).  Suppressing tensor-product symbols, the Bell operator corresponding to Eq.~\eqref{eq:Bdef} is
\begin{align}
 \Bellop={}&
 \widehat A_0\widehat B_0+\widehat A_0\widehat B_1
 \notag\\
 &+\widehat A_1\widehat B_0-\widehat A_1\widehat B_1
 \notag\\
 &+\frac1{100}\bigg(
 R_0\widehat B_0-\frac35R_1\widehat B_0
 +\frac45R_1\widehat B_1\bigg)
 \notag\\
 &-\frac1{100}\bigg(
 \frac35R_2\widehat B_0+\frac45R_2\widehat B_1
 \bigg).
\label{eq:Belloperator}
\end{align}
Equivalently, after eliminating \(R_2\),
\begin{align}
 \Bellop={}&
 \frac{1007}{500}\id-4P_0
 -\frac{1003}{250}Q_0-\frac2{125}Q_1
 \notag\\
 &+4P_0Q_0+4P_0Q_1
 +4P_1Q_0-4P_1Q_1
 \notag\\
 &-\frac3{125}R_0-\frac2{125}R_1
 +\frac4{125}R_0Q_0+\frac2{125}R_0Q_1
 \notag\\
 &+\frac4{125}R_1Q_1 .
\label{eq:Belloperator-projectors}
\end{align}

The explicit qubit strategy of Sec.~\ref{sec:povm} attains the Bell value \(\Omega:=2\sqrt2+1/100\).  To prove that this value is the dimension-unrestricted quantum maximum, it remains to show that \(\Bellop\leq\Omega\id\) in every representation of the projector algebra in Eq.~\eqref{eq:projector-algebra}.  The following exact level-3 sum-of-squares certificate establishes this operator inequality.

\begin{proposition}[Exact level-3 SOS certificate]
\label{prop:exact-sos}
Apply the deterministic reduction convention specified in Appendix~\ref{app:exact-sos-proof} to words in the quotient \(*\)-algebra defined by Eq.~\eqref{eq:projector-algebra}. There exist an ordered column vector \(\boldsymbol v\) containing the \(83\) distinct reduced words of degree at most three used by the certificate and a real symmetric matrix \(\Gram\in\Ksqrt^{83\times83}\) such that
\begin{equation}
    \Omega\id-\Bellop
    =
    \boldsymbol v^\dagger\Gram\boldsymbol v,
    \qquad
    \Gram\succeq0.
    \label{eq:exact-sos}
\end{equation}
\end{proposition}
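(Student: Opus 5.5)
The plan is to obtain the certificate from the level-3 NPA/SOS dual semidefinite program, then turn the numerical solution into an exact one over \(\Ksqrt\) and check it symbolically. First I fix the word basis. I enumerate all words of degree at most three in the generators \(P_0,P_1,R_0,R_1,Q_0,Q_1\) and reduce them with the relations in Eq.~\eqref{eq:projector-algebra}. The reduction uses idempotence, \(R_0R_1=R_1R_0=0\), and commutation of Bob's letters to the right with a fixed ordering. I then prune to the \(83\) reduced words that actually carry weight in the numerical dual. I also fix the linear map \(\Gram\mapsto\boldsymbol v^\dagger\Gram\boldsymbol v\) onto the span of reduced words of degree at most six. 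Equation~\eqref{eq:exact-sos} then becomes a finite system of linear equations: for each reduced word \(w\), the sum of \(\Gram_{ij}\) over pairs with \(\mathrm{red}(v_i^\dagger v_j)=w\) must equal the coefficient of \(w\) in \(\Omega\id-\Bellop\), read off from Eq.~\eqref{eq:Belloperator-projectors}.

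Second, I solve the dual SDP numerically to high precision with objective value \(\Omega\). This is possible because the qubit strategy of Sec.~\ref{sec:povm} shows the bound is tight, so the optimal Gram matrix is necessarily singular. Its kernel must contain the moment vector of that strategy, and the vectors of its symmetry images. I exploit this by computing the exact kernel vectors from the explicit strategy. These vectors have entries in \(\Ksqrt\), since the state, the observables \((Z\pm X)/\sqrt2\), and the rational POVM of Eq.~\eqref{eq:povm} all live there. I impose \(\Gram\boldsymbol k=0\) for each such kernel vector \(\boldsymbol k\) as additional linear constraints.

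Third, I rationalize. I round the numerical \(\Gram\) to entries \(a+b\sqrt2\) with small-denominator \(a,b\). I then project it orthogonally, in exact arithmetic over \(\Ksqrt\), onto the affine space cut out by the coefficient-matching equations and the kernel constraints. This makes the identity in Eq.~\eqref{eq:exact-sos} hold exactly. Positivity of the result is certified by an exact \(LDL^{\mathsf T}\) factorization over \(\Ksqrt\) restricted to the complement of the forced kernel. For each pivot \(a+b\sqrt2\), I check that it is nonnegative via sign tests on \(a^2-2b^2\), which involve only rational comparisons.

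The main obstacle is this last step. Positivity survives the projection only if the numerical matrix is strictly positive on the complement of the forced kernel, with a margin that dominates the rounding error. If extra near-zero eigenvalues appear, I would first try to identify further exact kernel directions. These would come from the other optimal strategies (e.g.\ complex conjugates and relabelings) or from relations among the words. I would add them to the constraints and reduce the basis, repeating until the reduced Gram matrix is numerically well-conditioned and the exact factorization succeeds. Once \(\Gram\succeq0\) is verified, writing \(\Gram=L^\dagger L\) gives \(\Omega\id-\Bellop=\sum_k(L\boldsymbol v)_k^\dagger(L\boldsymbol v)_k\succeq0\) in every representation. This yields the upper bound in Theorem~\ref{thm:global-optimum}.
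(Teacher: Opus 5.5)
Your overall structure matches the paper's: a real symmetric \(\Gram\) over \(\Ksqrt\) indexed by the reduced degree-\(\le3\) words, exact coefficient matching against \(\Omega\id-\Bellop\), and exact positivity on the face orthogonal to a kernel. As in the paper, only the final \(\Gram\) and the exact checks carry logical weight; the SDP, rounding and projection are just a search. The positivity tests differ. The paper writes \(\Gram=WHW^{\mathsf T}\) with \(W\in\Ksqrt^{83\times61}\) of full column rank, then finds a rational upper-triangular \(U\) making \(U^{\mathsf T}HU\) strictly diagonally dominant. This gives an explicit weighted sum of \(1891\) Hermitian squares and needs only sign tests in \(\Ksqrt\). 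Your exact \(LDL^{\mathsf T}\) on the face is equally valid and more direct, but exact elimination over \(\Ksqrt\) can suffer coefficient growth that the rational near-inverse \(U\) avoids. A minor point: \(1+6+20+56=83\) is the \emph{total} number of reduced words of degree at most three, so nothing is pruned and the certificate uses the full level-3 basis.

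The step that fails as written is the forced-kernel constraint. The strategy of Sec.~\ref{sec:povm} is not a representation of the algebra in Eq.~\eqref{eq:projector-algebra}, since \(M_a^2\neq M_a\) and \(M_0M_1\neq0\). The identity \(\Omega\id-\Bellop=\boldsymbol v^\dagger\Gram\boldsymbol v\) holds only modulo those relations: products \(w_i^\dagger w_j\) whose Alice junction is \(R_aR_a\) or \(R_aR_b\) have been reduced. Hence the vectors \(\bigl(\bra{\phi}w_j(M)\ket{\psi}\bigr)_j\) computed directly with the POVM are not forced into \(\ker\Gram\). Imposing \(\Gram\boldsymbol k=0\) for them is unjustified and can make the projected problem infeasible or non-PSD.

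You must first Naimark-dilate Alice. Write \(M_a=\ket{m_a}\bra{m_a}\) and take:
\begin{itemize}
\item an orthonormal basis \(\{\ket{f_a}\}\) of \(\mathbb C^3\) and the isometry \(V=\sum_a\ket{f_a}\bra{m_a}\);
\item \(R_a=\ket{f_a}\bra{f_a}\) and \(P_x'=VP_xV^\dagger+c_x\ket{g}\bra{g}\), with \(\ket{g}\perp\operatorname{ran}V\) and \(c_x\in\{0,1\}\);
\item the state \(\ket{\Psi}=(V\otimes\id)\ket{\Phi^+}\).
\end{itemize}
Each such choice is an optimal projective strategy, so \(\Gram\) must annihilate the column space of its moment matrix \(\bigl(\bra{\Psi}w_i^\dagger w_j\ket{\Psi}\bigr)_{ij}\).

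Your reason for staying in \(\Ksqrt\) also needs adjusting. The dilated vectors have coordinates involving \(\sqrt6\), e.g.\ \(R_0\ket{\Psi}=\tfrac{\sqrt6}{4}\ket{f_0}\otimes\ket{0}\). The moment matrices nevertheless have entries in \(\Ksqrt\), because the factors of \(\sqrt3\) coming from \(\ket{m_0}=(\sqrt3/2,0)\) always pair up. So \(\Gram M=0\) remains a \(\Ksqrt\)-linear constraint.

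Finally, one dilated strategy lives in \(\mathbb C^3\otimes\mathbb C^2\) and can be taken real, so its moment matrix has rank at most \(6\). The paper's certificate, by contrast, has \(\operatorname{rank}\Gram=61\), i.e.\ a \(22\)-dimensional kernel. Expect your ``extra kernel directions'' loop to do most of the work. Complex conjugation contributes nothing, since every operator in the strategy is real. A natural further source of forced directions is the non-uniqueness of the dilation: the choices of \(c_x\), or larger dilations.
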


\begin{proof}[Proof sketch]
Applying idempotence, orthogonality, and Alice--Bob commutation gives an ordered list of \(83\) distinct reduced words of degree at most three.  Expanding
\(\boldsymbol v^\dagger\Gram\boldsymbol v\) with respect to the fixed reduced-word list and
matching coefficients over \(\Ksqrt\) proves the polynomial identity in
Eq.~\eqref{eq:exact-sos}.

It remains to establish positivity without numerical approximation.  The
exact Gram matrix admits a factorization
\(\Gram=WHW^{\mathsf T}\), where \(W\) has full column rank.  Moreover,
there exists an invertible rational matrix \(U\) such that
\(\widetilde H=U^{\mathsf T}HU\) has positive diagonal entries and is
strictly diagonally dominant over \(\Ksqrt\).  Define
\begin{equation}
    \delta_i
    :=
    \widetilde H_{ii}
    -
    \sum_{j\neq i}
    \abs{\widetilde H_{ij}}
    >0.
    \label{eq:sos-diagonal-dominance}
\end{equation}
Let \(e_i\) denote the \(i\)th standard basis vector. With \(s_{ij}:=\operatorname{sgn}(\widetilde H_{ij})\), one has
\begin{align}
    \widetilde H
    ={}&
    \sum_i
    \delta_i e_i e_i^{\mathsf T}
    \notag\\
    &+
    \sum_{i<j}
    \abs{\widetilde H_{ij}}
    (e_i+s_{ij}e_j)
    (e_i+s_{ij}e_j)^{\mathsf T}.
    \label{eq:sos-diagonal-decomposition}
\end{align}
Hence \(\widetilde H\succ0\).  Since \(U\) is invertible, it follows that \(H\succ0\), and therefore \(\Gram\succeq0\).  The exact verification of the polynomial identity and the positivity of \(\Gram\) is given in Appendix~\ref{app:exact-sos-proof}.
\end{proof}

For any representation satisfying Eq.~\eqref{eq:projector-algebra}, the right-hand side of Eq.~\eqref{eq:exact-sos} is positive.  Hence \(\langle\Bellop\rangle\leq\Omega\) for every finite-dimensional tensor-product projective strategy.

To extend this projective upper bound to arbitrary finite families of local POVMs, we apply the simultaneous finite-family form of Naimark dilation \cite{BaptistaEtAl2025}, independently to Alice and Bob. The fixed local embeddings preserve every joint probability, while the setting-dependent PVMs satisfy the projector relations used in the SOS identity. Hence the same upper bound holds for every \(p\in\Q\). Together with the explicit qubit strategy of Sec.~\ref{sec:povm}, this proves Theorem~\ref{thm:global-optimum}.



\section{Discussion}
\label{sec:discussion}

The fixed-qubit separation is global over all two-qubit states and all allowed PVMs, not only over strategies on the exact CHSH-maximizing slice. Because a linear functional has the same supremum on a set and on its convex hull, the upper bound also applies when projective strategies are supplemented with shared classical randomness. A finite stochastic output map is a convex combination of deterministic maps, and deterministic relabeling or coarse-graining of a PVM again gives a PVM, possibly with zero projections. Thus neither shared classical randomness nor classical output processing closes the gap at local dimension two; see also the general study of projective simulability in Ref.~\cite{OszmaniecEtAl2017}.

The operational statement must nevertheless be interpreted with the dimension assumption visible. The observed Bell value certifies that Alice's ternary measurement cannot be simulated using qubit PVMs, arbitrary two-qubit states, shared randomness, and classical output processing. It does not exclude a projective implementation after introducing a quantum ancilla, because Naimark dilation enlarges the local Hilbert space. The resource witnessed by the theorem is therefore access to nonorthogonal measurement effects within a fixed two-dimensional laboratory, or equivalently access to the effective measurement space supplied by an ancilla.

The exact SOS certificate adds a logically different dimension-unrestricted statement. Even after arbitrary finite local dimensions and general POVMs are allowed, the Bell score cannot exceed \(2\sqrt2+1/100\). Thus the explicit qubit POVM is globally optimal. This does not prove that the maximizing realization is unique, does not self-test the POVM, and does not make a POVM necessary once larger local dimensions are permitted.

Relative to the fixed-state analytic comparison and numerical arbitrary-state optimization of V\'ertesi and Bene~\cite{VertesiBene2010}, the present construction converts the global qubit-PVM exclusion into a closed analytic argument. Its central mechanism is an anchor--probe tradeoff. One defect \(q\) controls both the reduced-state bias and the overlap of Bob's measurement directions. The restricted projective class can gain only \(O(\sqrt q)\) in the probe while losing \(O(q)\) in CHSH, so a probe of strength \(\varepsilon\) produces a first-order POVM advantage and only a second-order projective compensation. Corollary~\ref{cor:epsilon-family} shows that this mechanism yields an interval of separating witnesses rather than one isolated rational point. The value \(\varepsilon=1/100\) is distinguished because it also admits the exact unrestricted SOS certificate presented here.

The present witness is mathematically exact but has limited experimental
robustness. Consider the uniform-white-noise model
\begin{equation}
    p_v
    =
    v p_\star+(1-v)p_{\mathrm{unif}},
    \qquad
    0\leq v\leq1,
\end{equation}
where \(p_{\mathrm{unif}}\) denotes the uniform behavior and satisfies
\(\mathcal B(p_{\mathrm{unif}})=0\). To exclude all qubit-PVM strategies
using the analytic upper bound in Eq.~\eqref{eq:pvmupper}, the visibility
must satisfy
\begin{equation}
    v
    >
    \frac{
        2\sqrt2+\sqrt5/250+\sqrt2/32400
    }{
        2\sqrt2+1/100
    }
    \approx 0.9996434.
    \label{eq:white-noise-visibility}
\end{equation}
This visibility should not be interpreted as a finite-statistics threshold
or as a robust self-testing guarantee, neither of which is established
here. Improving the noise tolerance by optimizing the probe directions and
the probe strength remains an important direction beyond the exact
separation proved in this work.

The scope of the result is precise. The analytic theorem does not determine the exact optimum over \(\Qp(2)\), although it gives a sharp closed upper bound. The construction uses three settings for Alice and therefore does not settle the exactly-two-settings-per-party program of Ref.~\cite{CerfOllivier2026}. It also does not provide uniqueness, measurement reconstruction, or a commuting-operator strengthening. These boundaries leave open the exact qubit-PVM optimum, more robust witnesses, general anchor--probe separation theorems, and analogous fixed-dimension separations under tighter setting constraints.

\section{Conclusion}
\label{sec:conclusion}

We have given a fully analytic Bell-functional separation between qubit POVMs and all qubit-projective strategies with arbitrary shared two-qubit states. The proof reduces the global optimization to a transparent geometric mechanism: CHSH anchors the state and Bob's observables, while a weak ternary probe distinguishes a balanced three-ray qubit POVM from the one-diameter geometry of a ternary qubit PVM. The same argument proves an interval of fixed-qubit separating witnesses. For the rational instance \(\varepsilon=1/100\), an exact noncommutative sum-of-squares certificate further proves that the explicit qubit POVM reaches the unrestricted finite-dimensional quantum optimum. The result therefore separates the human-readable fixed-dimensional theorem from the exact global certificate and identifies the missing Naimark ancilla as an operational measurement resource under a dimension constraint.

\section*{Data Availability}

A Lean~4 formalization for Theorem \ref{thm:fixed-separation} is available in the \href{https://github.com/real-lin-zhu/Exact-Fixed-Dimension-Bell-Separation-between-Qubit-POVMs-and-PVMs/tree/main/Lean_formalisation}{\texttt{Lean\_formalisation}} of the \href{https://github.com/real-lin-zhu/Exact-Fixed-Dimension-Bell-Separation-between-Qubit-POVMs-and-PVMs}{public GitHub repository}.
The exact algebraic certificate, fixed word ordering, result files, and independent verification scripts supporting Theorem~\ref{thm:global-optimum} are available in the
\href{https://github.com/real-lin-zhu/Exact-Fixed-Dimension-Bell-Separation-between-Qubit-POVMs-and-PVMs}{same GitHub repository}. All SOS materials are contained in the
\href{https://github.com/real-lin-zhu/Exact-Fixed-Dimension-Bell-Separation-between-Qubit-POVMs-and-PVMs/tree/main/result/exact_sos}{\texttt{result/exact\_sos} directory}.

\section*{Acknowledgments}
This work was supported by the Guangdong Provincial Quantum Science Strategic Initiative under Grants No.~GDZX2503001 and No.~GDZX2403001.

\section*{Statement on the Use of Artificial Intelligence}

The authors formulated the research question and used several frontier LLMs under human direction for exploratory generation of candidate POVM constructions. The authors independently selected and reformulated the final Bell functional, derived the analytic fixed-qubit bound, checked every displayed calculation, constructed the exact sum-of-squares verification workflow, and reviewed the final manuscript. The machine-readable certificate was verified by exact arithmetic with zero coefficient residuals and exact positivity checks. The authors take full responsibility for the content and correctness of this article.

\bibliography{fixed_dimension_povm_pvm_pra}

\clearpage
\onecolumngrid
\appendix

\section{Exact verification and explicit form of the level-3 SOS certificate}
\label{app:exact-sos-proof}

This appendix proves Proposition~\ref{prop:exact-sos} and records the exact
algebraic results entering the certificate. For a noncommutative polynomial
\(p\), \(\operatorname{red}(p)\) denotes the deterministic reduced representative
used by the verifier. The map is extended linearly from words, moves every
Bob generator \(Q_y\) to the right using Alice--Bob commutation, removes
adjacent repetitions using idempotence, sets words containing
\(R_0R_1\) or \(R_1R_0\) to zero, and repeats these sound quotient-algebra
reductions until no rule applies. It does not interchange distinct measurement
settings belonging to the same party. The proof uses this fixed reduction
convention only to verify that the difference between the two sides of the
claimed identity reduces exactly to zero; no claim of a unique canonical normal
form is required. All scalar calculations are performed in the ordered field
\[
 \Ksqrt=\{a+b\sqrt2:a,b\in\mathbb Q\},
\]
with the physical real embedding \(\sqrt2>0\). The certificate fixes the
ordering of the reduced words and gives the entries of all matrices below
exactly in this field.

\subsection{Reduced word space}

Consider the unital \(*\)-algebra generated by the self-adjoint symbols
\(P_0,P_1,R_0,R_1,Q_0,Q_1\), subject to
\begin{align}
 P_x^2&=P_x, &
 Q_y^2&=Q_y, &
 R_a^2&=R_a, \notag\\
 R_0R_1&=R_1R_0=0, &
 [P_x,Q_y]&=[R_a,Q_y]=0,
 \label{eq:app-algebra-relations}
\end{align}
where \(x,y,a\in\{0,1\}\).  No commutation relation is imposed between
different settings of the same party.

Under the fixed reduction convention, Alice--Bob commutation writes every nonzero word in the form
\(w=w_Aw_B\), where \(w_A\) contains only \(P_0,P_1,R_0,R_1\) and
\(w_B\) contains only \(Q_0,Q_1\).  Adjacent repetitions are removed by
idempotence, words containing \(R_0R_1\) or \(R_1R_0\) vanish, and every
nonempty Bob word alternates between \(Q_0\) and \(Q_1\).
Let \(\operatorname{red}\) denote the linear map that sends a word
polynomial to this reduced representative.

Let \(a_n\) and \(b_n\) be the numbers of reduced Alice and Bob words of
length \(n\), respectively.  Up to degree six, these numbers are
\begin{equation}
 (a_0,\ldots,a_6)=(1,4,10,26,66,170,434),
 \qquad
 (b_0,\ldots,b_6)=(1,2,2,2,2,2,2).
 \label{eq:app-local-word-counts}
\end{equation}
Thus the numbers
\(N_n=\sum_{k=0}^n a_kb_{n-k}\) of distinct reduced words produced by this
convention at total degree \(n\) are
\begin{equation}
 (N_0,\ldots,N_6)=(1,6,20,56,148,384,988).
 \label{eq:app-degree-counts}
\end{equation}
In particular, the certificate uses
\begin{equation}
 N_0+N_1+N_2+N_3=83
 \label{eq:app-word-count}
\end{equation}
distinct reduced words of degree at most three. Fix the ordering supplied
with the certificate and write
\begin{equation}
 \boldsymbol v=(w_1,\ldots,w_{83})^{\mathsf T},
 \qquad w_1=\id .
 \label{eq:app-word-vector}
\end{equation}

\subsection{Exact polynomial identity}

Let
\[
 T:=\Omega\id-\Bellop,
 \qquad
 \Omega=2\sqrt2+\frac1{100}.
\]
Using the projector form of \(\Bellop\) in
Eq.~\eqref{eq:Belloperator-projectors}, the target polynomial is exactly
\begin{align}
 T={}&
 \left(2\sqrt2-\frac{501}{250}\right)\id
 +4P_0+\frac{1003}{250}Q_0+\frac2{125}Q_1
 \notag\\
 &-4P_0Q_0-4P_0Q_1-4P_1Q_0+4P_1Q_1
 \notag\\
 &+\frac3{125}R_0+\frac2{125}R_1
 -\frac4{125}R_0Q_0-\frac2{125}R_0Q_1
 -\frac4{125}R_1Q_1 .
 \label{eq:app-target-polynomial}
\end{align}
Thus \(T\) has exactly \(13\) nonzero reduced-word coefficients.

Let \(\mathcal R_6\) be the set of all oriented reduced words of degree at
most six.  Equation~\eqref{eq:app-degree-counts} gives
\begin{equation}
 \abs{\mathcal R_6}
 =\sum_{n=0}^6N_n=1603.
 \label{eq:app-degree-six-count}
\end{equation}
Among these words, \(163\) are fixed by the involution
\(u\mapsto u^\dagger\).  The number of adjoint orbits is therefore
\begin{equation}
 \frac{1603+163}{2}=883.
 \label{eq:app-adjoint-orbit-count}
\end{equation}

For \(u\in\mathcal R_6\), define
\begin{align}
 d_{ij}(u)
 &:=[u]\,\operatorname{red}(w_i^\dagger w_j)\in\mathbb Z,
 \notag\\
 t(u)
 &:=[u]\,\operatorname{red}(T)\in\Ksqrt ,
 \label{eq:app-coefficient-definitions}
\end{align}
where \([u]\) denotes coefficient extraction.  For a real symmetric
\(\Gram=(\Gram_{ij})\in\Ksqrt^{83\times83}\),
\begin{equation}
 \operatorname{red}
 \bigl(\boldsymbol v^\dagger\Gram\boldsymbol v\bigr)
 =
 \sum_{u\in\mathcal R_6}
 \left(
  \sum_{i,j=1}^{83}\Gram_{ij}d_{ij}(u)
 \right)u .
 \label{eq:app-Gram-reduction}
\end{equation}
The exact certificate satisfies
\begin{equation}
 t(u)=
 \sum_{i,j=1}^{83}\Gram_{ij}d_{ij}(u),
 \qquad u\in\mathcal R_6.
 \label{eq:app-coefficient-matching}
\end{equation}
These are \(1603\) oriented-word equalities.  Since both sides are
self-adjoint with real coefficients, it is equivalent to check one
representative of each of the \(883\) adjoint orbits.  Each equality is
checked separately for the rational and \(\sqrt2\) components, and every
residual is exactly zero.  The resulting Gram matrix has \(2321\) nonzero
upper-triangular entries.  Equation
\eqref{eq:app-coefficient-matching} proves the exact identity
\begin{equation}
 \Omega\id-\Bellop
 =\boldsymbol v^\dagger\Gram\boldsymbol v
 \label{eq:app-polynomial-identity}
\end{equation}

\subsection{Exact positivity and rank}

The exact matrices
\begin{equation}
 W\in\Ksqrt^{83\times61},
 \qquad
 H\in\Ksqrt^{61\times61}
 \label{eq:app-WH-dimensions}
\end{equation}
satisfy
\begin{equation}
 \Gram=WHW^{\mathsf T},
 \label{eq:app-facial-factorization}
\end{equation}
where \(W\) has full column rank and \(H\) is real symmetric.  There is also
an invertible rational upper-triangular matrix
\(U\in\mathbb Q^{61\times61}\).  Define
\begin{equation}
 \widetilde H:=U^{\mathsf T}HU
 =(\widetilde h_{ij})_{i,j=1}^{61}.
 \label{eq:app-congruence}
\end{equation}
For each \(i=1,\ldots,61\), exact arithmetic in the physical ordering of
\(\Ksqrt\) gives
\begin{equation}
 \delta_i:=
 \widetilde h_{ii}
 -\sum_{j\ne i}\abs{\widetilde h_{ij}}>0.
 \label{eq:app-diagonal-dominance}
\end{equation}
The sign of \(a+b\sqrt2\) is decided without approximation: when \(a\) and
\(b\) have opposite signs, it is enough to compare the rational numbers
\(a^2\) and \(2b^2\).

For every \(i<j\), set
\(s_{ij}:=\operatorname{sgn}(\widetilde h_{ij})\).  In the present
certificate all \(1830=\binom{61}{2}\) off-diagonal entries are nonzero.
Let \(e_i\) be the \(i\)th standard basis vector of \(\mathbb R^{61}\).
Entrywise comparison gives the exact decomposition
\begin{align}
 \widetilde H={}&
 \sum_{i=1}^{61}\delta_i e_i e_i^{\mathsf T}
 \notag\\
 &+
 \sum_{1\le i<j\le61}
 \abs{\widetilde h_{ij}}
 (e_i+s_{ij}e_j)(e_i+s_{ij}e_j)^{\mathsf T}.
 \label{eq:app-dd-decomposition}
\end{align}
All \(61+1830=1891\) weights are strictly positive in the physical ordering.
It follows exactly that
\begin{equation}
 \widetilde H\succ0,\qquad
 H=(U^{-1})^{\mathsf T}\widetilde H U^{-1}\succ0,\qquad
 \Gram\succeq0.
 \label{eq:app-exact-positivity}
\end{equation}
Since \(W\) has full column rank,
\begin{equation}
 \operatorname{rank}\Gram=61,
 \qquad
 \dim\ker\Gram=22.
 \label{eq:app-exact-rank}
\end{equation}

\subsection{Exact operator sum of squares}

The matrices above determine the Hermitian-square polynomials explicitly.
Define
\begin{equation}
 \boldsymbol f
 :=U^{-1}W^{\mathsf T}\boldsymbol v
 =(f_1,\ldots,f_{61})^{\mathsf T}.
 \label{eq:app-sos-polynomials}
\end{equation}
Each \(f_i\) is a polynomial of degree at most three with coefficients in
\(\Ksqrt\).  Because
\(H=(U^{-1})^{\mathsf T}\widetilde H U^{-1}\),
\begin{align}
 \boldsymbol v^\dagger\Gram\boldsymbol v
 &=
 \boldsymbol v^\dagger
 W(U^{-1})^{\mathsf T}\widetilde H
 U^{-1}W^{\mathsf T}\boldsymbol v
 \notag\\
 &=\boldsymbol f^\dagger\widetilde H\boldsymbol f.
 \label{eq:app-congruence-on-polynomials}
\end{align}
Combining Eqs.~\eqref{eq:app-polynomial-identity},
\eqref{eq:app-dd-decomposition}, and
\eqref{eq:app-congruence-on-polynomials} yields the exact weighted
operator-SOS identity
\begin{equation}
 \begin{aligned}
  \Omega\id-\Bellop
  ={}&
  \sum_{i=1}^{61}\delta_i f_i^\dagger f_i
  \\
  &+
  \sum_{1\le i<j\le61}
  \abs{\widetilde h_{ij}}\,
  (f_i+s_{ij}f_j)^\dagger
  (f_i+s_{ij}f_j).
 \end{aligned}
 \label{eq:app-explicit-weighted-sos}
\end{equation}
This is the exact SOS form over the ordered field \(\Ksqrt\): it is a sum of
\(1891\) Hermitian squares with exact positive weights in \(\Ksqrt\).

For the usual unweighted notation, define
\begin{equation}
 L_i:=\sqrt{\delta_i}\,f_i,
 \qquad
 L_{ij}:=
 \sqrt{\abs{\widetilde h_{ij}}}\,
 (f_i+s_{ij}f_j).
 \label{eq:app-unweighted-polynomials}
\end{equation}
After adjoining the positive square roots of the weights,
Eq.~\eqref{eq:app-explicit-weighted-sos} becomes
\begin{equation}
 \Omega\id-\Bellop
 =
 \sum_{i=1}^{61}L_i^\dagger L_i
 +
 \sum_{1\le i<j\le61}L_{ij}^\dagger L_{ij}.
 \label{eq:app-standard-sos}
\end{equation}
The weighted form in Eq.~\eqref{eq:app-explicit-weighted-sos} is the form
whose coefficients remain entirely in \(\Ksqrt\).

Finally, Eq.~\eqref{eq:app-explicit-weighted-sos} proves
\(\Bellop\le\Omega\id\) in every representation of the projector algebra.
Together with the explicit saturating strategy and local Naimark dilation,
the exact dimension-unrestricted value is therefore
\begin{equation}
 \max_{p\in\Q}\mathcal B(p)
 =2\sqrt2+\frac1{100}.
 \label{eq:app-exact-quantum-value}
\end{equation}
This completes the proof of Proposition~\ref{prop:exact-sos}. 

All exact SOS certificates, associated result files, the fixed word ordering, and verification scripts
are available in the
\href{https://github.com/real-lin-zhu/Exact-Fixed-Dimension-Bell-Separation-between-Qubit-POVMs-and-PVMs/tree/main/result/exact_sos}
{\texttt{exact\_sos} directory}
of the public GitHub repository.

\end{document}